\documentclass[11pt,a4paper]{article}


\usepackage[utf8]{inputenc}
\usepackage[english]{babel}
\usepackage[T1]{fontenc}

\usepackage[affil-it]{authblk} 

\usepackage{amsmath}
\usepackage{amsfonts}
\usepackage{amssymb}
\usepackage{amsthm}

\usepackage[all]{xy} 

\usepackage{rotating}
\usepackage{multirow}
\usepackage{textcomp}

\usepackage{enumerate}
\usepackage[shortlabels]{enumitem}

\usepackage{algorithm} 
\usepackage[noend]{algorithmic}

\usepackage{graphicx}
\usepackage{makeidx,shortvrb,latexsym}
\usepackage{fancyhdr}
\usepackage{indentfirst}
\usepackage[usenames,dvipsnames]{pstricks}
\usepackage{epsfig}

\usepackage{hyperref} 
\hypersetup{
    colorlinks,
    citecolor=black,
    filecolor=black,
    linkcolor=black,
    urlcolor=black
}


\usepackage{verbatim}
\usepackage{listings} 
\usepackage{colortbl}
\usepackage{xcolor}
\usepackage{color}
\definecolor{mygreen}{rgb}{0,0.6,0}
\definecolor{mygray}{rgb}{0.5,0.5,0.5}
\definecolor{mymauve}{rgb}{0.58,0,0.82}
\lstset{ %
  basicstyle=\linespread{0.9}      
             \footnotesize\ttfamily,
  backgroundcolor=\color{white},   
  basewidth=0.5em,                 
  breakatwhitespace=true,          
  breaklines=true,                 
  captionpos=b,                    
  commentstyle=\color{mygreen},    
  deletekeywords={ and, or, not},  
  escapeinside={\%*}{*)},          
  extendedchars=true,              
  frame=single,                    
  keepspaces=true,                 
  keywordstyle=\color{blue},       
  language=C++,                    
  morekeywords={ end, function, local, then, elif},            
  numbers=left,                    
  numbersep=5pt,                   
  numberstyle=\tiny\color{mygray}, 
  rulecolor=\color{black},         
  showspaces=false,                
  showstringspaces=false,          
  showtabs=false,                  
  stepnumber=1,                    
  stringstyle=\color{mymauve},     
  tabsize=2,                       
  title=\lstname                   
}



\theoremstyle{plain}
\newtheorem{theorem}{Theorem}

\theoremstyle{definition}
\newtheorem{definition}{Definition}

\theoremstyle{remark}
\newtheorem{remark}{Remark}
\newtheorem{note}{Note}



%

%





%

%

%







\title{An Extension of Cook's Elastic Cipher}
%
%

\author{Emanuele Bellini%
  \thanks{Electronic address: \texttt{eemanuele.bellini@gmail.com}; Corresponding author}}

\author{Guglielmo Morgari%
  \thanks{Electronic address: \texttt{guglielmo.morgari@telsy.it}}}
\author{Marco Coppola%
  \thanks{Electronic address: \texttt{marco.coppola@telsy.it}}}

\affil{Telsy Elettronica S.p.A.,\\ 
Corso Svizzera, 185, Torino (TO), Italy}

\date{Dated: \today}

\begin{document}
\maketitle
\begin{abstract}
Given a block cipher of length $L$ Cook's elastic cipher allows to encrypt messages of variable length from $L$ to $2L$.
Given some conditions on the key schedule, Cook's elastic cipher is secure against any key recovery attack if the underlying block cipher is, and it achieves complete diffusion in at most $q+1$ rounds if the underlying block cipher achieves it in $q$ rounds.
We extend Cook's construction inductively, obtaining the first elastic cipher for any message length greater than $L$ with the same properties of security as Cook's elastic cipher, and whose complexity of the encryption/decryption grows polynomially with the size of the message.
\end{abstract}
Keywords: variable length block cipher, elastic cipher, Cook.
%
%
\section{Introduction}
To overcome the computational and space overheads arising from plaintext-padding techniques and some disadvantages of ciphertext stealing \cite{rogaway2012security} (such as altering how the mode is applied to the last two blocks and the need of decrypting the last block before the next-to-last block), variable input length block ciphers have been introduced in the past \cite{bellare1999construction,patel2005efficient}. 
Some work has been done on variable-length pseudorandom functions. 
The focus went from variable-length inputs with fixed length output as applicable to MACs and hash functions, \cite{an1999constructing,bellare1996pseudorandom,bernstein1999stretch,black2000cbc} 
to modes that work on multiples of the original block length \cite{halevi2003tweakable,halevi2004parallelizable}. 
There have also been ad-hoc attempts to design a variable-length block cipher from scratch \cite{reeds1992cryptosystem,Schroeppel1999hasty}. \\
All mentioned proposals for converting existing block ciphers into variable-length ones focused on treating a block cipher as a black box and combining it with other operations in what amounts to a mode of encryption. 
While such an approach allows the security of the variable-length block cipher to be defined in terms of the original block cipher, the resulting constructions require multiple applications of the original block cipher, making them computationally inefficient compared to padding. 
These methods may be valuable in providing modes of encryption that preserve the length of the data but they do not address how to design block ciphers to support variable-length blocks. 
Thus the concept of an \emph{elastic cipher} has been introduced by Cook et al. in \cite{cook2006elastic,cook2007elastic,cook2008methods,cook2009elastic}. 
Built on an existing block cipher, they provide a variable input length cipher whose length can be extended up to two times the original block cipher length.
To gain efficiency, the design of such a cipher exploits existing components of a cipher (e.g. the internal round) threating them as a black box. 
To prove security they show that the elastic version of a block cipher is secure against attacks that attempt to recover key bits if the original, fixed-length version of the cipher is secure against such attacks. 
Their method is unique in that they show how to convert such an attack on the elastic version directly into an attack on the original version with a polynomially related time complexity.\\
The elastic construction allows also for new modes of encryption \cite[\S 3]{cook2009modes}.\\
Here we provide an extension of Cook's elastic cipher, providing a still efficient construction with input of any length, offering the same security of the underlying block cipher. 
An advantage of such a solution is that when encrypting two identical and concatenated plaintexts of length twice the underlying block cipher size, the resulting ciphertext appears as a completely random string, instead of two identical concatenated random string. 
In fact, our construction can also be seen as an efficient mode of operation itself.\\

We present here our extension of Cook's elastic block cipher.
Her extension $E_1$ allows to construct an elastic cipher of length from $L$ to $2L$, where $L$ is the message length of the fixed input length block cipher $E_0$.
We start from $E_1$ and apply the elastic extension to $E_1$ itself, obtaining $E_2$ and we relate its security against key recovery attacks to that of $E_1$, and we repeat the process to build iteratively an elastic extension $E_n$ of $E_0$ which takes inputs of length $2^{n-1}L$ up to $2^nL$.\\ 
Section \ref{secCook} gives a brief description of Cook's idea recalling which are the key schedule requirements.
Section \ref{secCookExt} describes our extension, showing the algorithm to encrypt a plaintext of length $2^{n-1}L+y$, with $0\leq y\leq2^{n-1}L$.
Section \ref{secSecurity} is devoted to the security of $E_n$, first analyzing how the extension exploits the diffusion properties of the round function of $E_0$, and second giving a proof of how the security against key recovery attack of $E_n$ is reduced to that of $E_0$.
This is achieved following and extending Cook's proof for $E_1$.
\section{Cook's elastic cipher}\label{secCook}
\subsection{Preliminaries}
Let $n\in \mathbb{N}$, $n\geq1$.
Let $\{0,1\}^{\geq n}$ denote the set of all binary strings with length at least $n$.
The function $l(M)$ denotes the length of the string $M$.
A \emph{message/plaintext space} $\mathcal{M}$ is a nonempty subset of $\{0,1\}^{\geq n}$ for which $M\in\mathcal{M}$ implies that $M'\in\mathcal{M}$ for all $M'$ of the same length of $M$.
A \emph{ciphertext space} (or \emph{range}) $\mathcal{C}$ is a nonempty subset of $\{0,1\}^{\geq n}$.
From now on, $\mathcal{C=M}$.
A \emph{key space} $\mathcal{K}$ is a nonempty set together with a probability measure on that set.
A \emph{pseudorandom permutation} (PRP) with key space $\mathcal{K}$, message space $\mathcal{M}$ and range $\mathcal{C}$ is a set of permutations $F= \{F_K\mid K\in\mathcal{K}\}$ where each $F_K:\mathcal{M}\rightarrow\mathcal{C}$.
We usually write $F:\mathcal{K}\times\mathcal{M}\rightarrow\mathcal{C}$.
We assume that $l(F_K(M))$ depends only on $l(M)$.
A \emph{variable input length (VIL) cipher} is a PRP $F:\mathcal{K}\times \mathcal{M}\rightarrow \mathcal{M}$.
A \emph{block-cipher round} (or simply a \emph{round}) is a key-dependent permutation of the message/ciphertext space, precisely $R:\mathcal{K}\times\{0,1\}^n\rightarrow\{0,1\}^n$.
As usual we can consider $\{0,1\}^n$ to denote the vector space $(\mathbb{F}_2)^n$.
The number $n$ is called the \emph{block length}.
We consider rounds made of a round function (which is an invertible vectorial Boolean function indicated with the letter $C$, and that may be applied to just some of the round input bits, as in DES), followed by an affine map and a sum in $(\mathbb{F}_2)^n$ with the round key.
A \emph{block-cipher (BC)} is a cipher $F:\mathcal{K}\times\{0,1\}^n\rightarrow\{0,1\}^n$ obtained from the composition of $r$ rounds, whose keys are generated from the cipher key $k\in\mathcal{K}$ by a key schedule.
We write $F(k,x)=R_0(k_0(k),x_0)\circ...\circ R_{r}(k_{r}(k),x_{r})$ with $x\in\mathcal{M}$ and $k\in\mathcal{K}$.
\subsection{Description of the elastic cipher}
Call $E_0$ a block cipher with input length $L$, composed by $r_0$ rounds, where the $i$\-th round is indicated by $R_{0i}$, with $i=0,\ldots,r_0-1$.
\begin{definition}
 The \emph{cycle $C_0$ of the block cipher} $E_0$ is a vectorial Boolean function made of the least, over any key, number (called length) of consecutive rounds such that each bit of the cycle output is a function of at least two input bits.
\footnote{E.g., AES cycle coincides with its round; DES cycle is the composition of two consecutive round.}
\end{definition}
\begin{remark}
 Note that the cycle of a block cipher has been defined in \cite{cook2006elastic} as \emph{the sequence of steps in which all bits have been processed by the round function $R_0$}.
\end{remark}
We consider only ciphers that can be divided in cycles of the same length.
Suppose $E_0$ has $c_0$ cycles, and let $C_{0j}$ be $E_0$ $j$\-th cycle function ($j=1,\ldots,r_0/x$ where $x$ is the number of rounds per cycle).
Consider only block ciphers whose round function has a whitening as last operation (this is very common in real ciphers such as AES, Serpent DES, ...). \\
The whitenings after the cycles can be incorporated and extended to the full input length when applying the last whitening in the last round inside each cycle. \\
$E_0$ has $r_0=c_0\cdot x$ rounds.\\
The elastic version of $E_0$, which we call $E_1$, is widely described in \cite{cook2009elastic} with some different notations. 
We can define now $E_1$, the \emph{elastic extension} of $E_0$
\begin{definition}\label{defElastic}
\footnote{The construction in Def.\ref{defElastic} is given by Cook \cite{cook2006elastic}.}
An \emph{elastic cipher} $E_1$ of $E_0$ is a VIL cipher with message space $\{0,1\}^{L+y}$ with $0\leq y\leq L$, constructed from $E_0$ as follows:
\begin{itemize}
 \item the number of rounds of $E_1$ is $r_1=c_0+\lceil c_0\cdot y/L\rceil$
 \item $E_1$'s $i$\-th round $R_{1i}$ has one $E_0$'s cycle as round function, whose input are the leftmost $L$ bits and whose last sum with the key is expanded to the rightmost $y$ bit string $T$ that is not input into the $E_0$ cycle; the round function $C_{0j}$ is followed by a swap and a xor operation:
 \begin{enumerate}
  \item call $R$ the rightmost $y$ bit string which is the bit-wise sum of $T$ with $y$ key bits.
  \item choose a $y$ bit string $S$ from the output of the cycle and let its bits be the rightmost ones of $R_{1i}$ output;
  \item replace $S$ with $S\oplus T$
 \end{enumerate}
 \item add an initial and final whitening
 \item add an initial and final key-dependent permutation (as in \cite[page 24]{cook2007elastic}) between the initial and final whitening and the rounds.
\end{itemize}
\end{definition}
\subsection{The choice of the number of rounds in \texorpdfstring{$E_1$}{Lg}}
\label{secNumRounds}
The choice of $r_1$ helps us implementing the extension later.
In $E_0$ we have, by its definition, that to encrypt $L$ bits safely, they have to be \emph{influent} in $c_0$ cycles, 
where the expression \emph{influent in a cycle} means that a change in a bit of the input of $E_0$ may imply some changes in the output bits of that cycle, or said with other words, 
a input bit is influent in a cycle if it influences some bits of the output (a input bit influences an output bit if changing its value while keeping all other input bits fixed, 
implies a change in the output bit with probability greater than zero, taking the probability over all possible input values \cite[page 6]{cook2009elastic}).
When we add some extra bits to the input, say $y$, and we apply the function $C_0$ to the $L+y$ bit string, we have $y$ bits not influencing the cycle.
So, in the next cycle, we “mix” the $y$ bits left out with the input of the second cycle by xoring them, and letting out some other $y$ bits.
The problem now is that we will have some other $y$ bits remaining out of the cycle.
So, if we have $c_0$ cycles in $E_0$, and we only do the trick of xoring bits to new inputs, at the end of $c_0$ cycles we will have $c_0\cdot y$ bits which have not been influent in some of the cycles.
So we need to add $c_0\cdot y/L$ round to our construction.
\subsection{Key schedule requirements for the elastic cipher}
We give three requirements for the key schedule taken from \cite[page 153]{cook2006elastic} which can be satisfied if we use a pseudorandom generator (e.g.
RC4 as in \cite[page 24]{cook2009elastic}):
\emph{
\begin{enumerate}
  \item the key schedule should be a stand-alone algorithm that is usable to any BC;
  \item the expanded-key bits should be (or as close to) pseudorandom (as practical);
  \item the expanded-key rate for elastic block cipher should be a small multiple of the key expansion rate of a standard BC.
\end{enumerate}
}
\section{Extension of the elastic cipher}\label{secCookExt}
\subsection{Our proposal}
Our proposal is to consider an elastic cipher $E_1$ working with fixed length messages (as it was a block cipher), and apply iteratively the elastic extension to it.
We can define a cycle of an elastic cipher exactly as a cycle for a block cipher: 
\begin{definition}
 The \emph{cycle $C_{1i}$ of an elastic cipher $E_1$}, with $i=0,\ldots,r_1/2-1$, is a vectorial Boolean function made of the least, over any key, number of consecutive $E_1$'s rounds such that each bit of $C_{1i}$ output is a function at least two input bits.
\end{definition}
Now we are ready to construct $E_2$, $E_3$, ..., $E_n$, simply by iterating the previous process.
The first round function of $E_2$ looks like Fig.\ref{fig4}, whose notation is explained in the following lines:
\begin{itemize}
 \item the subscripts $A,B,Y$ indicate respectively the $L$ leftmost, the $L$ second leftmost and the $Y$ rightmost bits of the input $P$ or of the key $K$.
 \item $R_{ki}$ is the round function of $E_k$ used in round $i$ of $E_k$.
 \item $C_{ki}$ is the cycle function of $E_k$ used as round function in round $i$ of $E_{k+1}$.
 \item $K_X$ is a portion of bits of the key $K$, where $X=[i,\ldots,j]$ indicates a set of indexes.
In Fig.\ref{fig4} we have: $A=[0,\ldots,L-1]$, $B=[L,\ldots,2L-1]$, $Y=[2L,\ldots,2L+y-1]$.
 \item $P_A^i$ is the portion $A$ of the input of the $i+1$\-th round, or equivalently the output of the $i$\-th round.
Similarly for  $P_B^i$ and  $P_Y^i$.
\end{itemize}
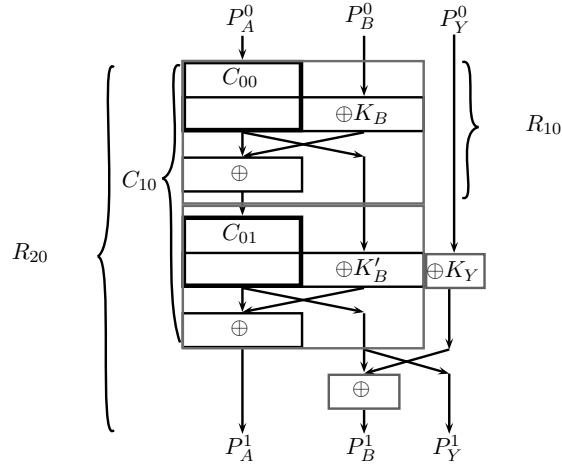
\begin{figure}
\centering
\scalebox{0.8} 
{
\begin{pspicture}(0,-3.7992187)(10.589063,3.7992187)
\definecolor{color513}{rgb}{0.4235294117647059,0.4235294117647059,0.4235294117647059}
\definecolor{color408}{rgb}{0.3764705882352941,0.37254901960784315,0.3843137254901961}
\psframe[linewidth=0.074,dimen=outer](5.51,2.8657813)(3.51,1.6657813)
\psframe[linewidth=0.04,dimen=outer](7.51,2.2657812)(3.51,1.6657813)
\psframe[linewidth=0.04,dimen=outer](5.51,1.2657813)(3.51,0.66578126)
\psline[linewidth=0.04cm,arrowsize=0.05291667cm 2.0,arrowlength=1.4,arrowinset=0.4]{<-}(4.51,2.8657813)(4.51,3.2657812)
\psline[linewidth=0.04cm,arrowsize=0.05291667cm 2.0,arrowlength=1.4,arrowinset=0.4]{<-}(6.51,2.2657812)(6.51,3.2657812)
\psline[linewidth=0.04cm,arrowsize=0.05291667cm 2.0,arrowlength=1.4,arrowinset=0.4]{<-}(4.51,1.2657813)(4.51,1.6657813)
\psline[linewidth=0.04cm,arrowsize=0.05291667cm 2.0,arrowlength=1.4,arrowinset=0.4]{<-}(4.51,1.2657813)(6.51,1.6657813)
\psline[linewidth=0.04cm,arrowsize=0.05291667cm 2.0,arrowlength=1.4,arrowinset=0.4]{<-}(6.51,1.2657813)(4.51,1.6657813)
\psbezier[linewidth=0.04](8.391898,1.7531185)(8.123966,1.7531185)(8.391898,2.9257812)(8.123966,2.8280594)
\psbezier[linewidth=0.04](8.391898,1.7531185)(7.99,1.6553966)(8.391898,0.4827338)(8.123966,0.58045566)
\usefont{T1}{ptm}{m}{n}
\rput(1.0245312,-0.32921875){$R_{20}$}
\usefont{T1}{ptm}{m}{n}
\rput(4.474531,2.5707812){$C_{00}$}
\usefont{T1}{ptm}{m}{n}
\rput(4.4645314,0.99078125){$\oplus$}
\usefont{T1}{ptm}{m}{n}
\rput(6.474531,1.9707812){$\oplus K_B$}
\usefont{T1}{ptm}{m}{n}
\rput(4.4845314,3.5507812){$P_A^0$}
\usefont{T1}{ptm}{m}{n}
\rput(6.4445314,3.5907812){$P_B^0$}
\usefont{T1}{ptm}{m}{n}
\rput(8.004531,3.5307813){$P_Y^0$}
\psframe[linewidth=0.074,dimen=outer](5.51,0.28578126)(3.51,-0.9142187)
\psframe[linewidth=0.04,dimen=outer](7.51,-0.31421876)(3.51,-0.9142187)
\psframe[linewidth=0.04,dimen=outer](5.51,-1.3142188)(3.51,-1.9142188)
\psline[linewidth=0.04cm,arrowsize=0.05291667cm 2.0,arrowlength=1.4,arrowinset=0.4]{<-}(4.51,0.28578126)(4.51,0.68578124)
\psline[linewidth=0.04cm,arrowsize=0.05291667cm 2.0,arrowlength=1.4,arrowinset=0.4]{<-}(6.51,-0.31421876)(6.51,1.2657813)
\psline[linewidth=0.04cm,arrowsize=0.05291667cm 2.0,arrowlength=1.4,arrowinset=0.4]{<-}(4.51,-1.3142188)(4.51,-0.9142187)
\psline[linewidth=0.04cm,arrowsize=0.05291667cm 2.0,arrowlength=1.4,arrowinset=0.4]{<-}(4.51,-1.3142188)(6.51,-0.9142187)
\psline[linewidth=0.04cm,arrowsize=0.05291667cm 2.0,arrowlength=1.4,arrowinset=0.4]{<-}(6.51,-1.3142188)(4.51,-0.9142187)
\psline[linewidth=0.04cm,arrowsize=0.05291667cm 2.0,arrowlength=1.4,arrowinset=0.4]{<-}(4.51,-3.3342187)(4.51,-1.9142188)
\psline[linewidth=0.04cm,arrowsize=0.05291667cm 2.0,arrowlength=1.4,arrowinset=0.4]{<-}(6.51,-2.3142188)(6.51,-1.3342187)
\usefont{T1}{ptm}{m}{n}
\rput(4.474531,-0.00921875){$C_{01}$}
\usefont{T1}{ptm}{m}{n}
\rput(4.4645314,-1.5892187){$\oplus$}
\usefont{T1}{ptm}{m}{n}
\rput(6.474531,-0.6092188){$\oplus K'_B$}
\usefont{T1}{ptm}{m}{n}
\rput(6.4645314,-2.5892189){$\oplus$}
\psframe[linewidth=0.04,linecolor=color408,dimen=outer](8.51,-0.33421874)(7.51,-0.93421876)
\usefont{T1}{ptm}{m}{n}
\rput(7.974531,-0.62921876){$\oplus K_Y$}
\psline[linewidth=0.04cm,arrowsize=0.05291667cm 2.0,arrowlength=1.4,arrowinset=0.4]{<-}(7.99,-0.33421874)(7.99,3.2857811)
\usefont{T1}{ptm}{m}{n}
\rput(4.4645314,-3.5692186){$P_A^1$}
\usefont{T1}{ptm}{m}{n}
\rput(6.4645314,-3.5692186){$P_B^1$}
\usefont{T1}{ptm}{m}{n}
\rput(7.884531,-3.5692186){$P_Y^1$}
\psline[linewidth=0.04cm,arrowsize=0.05291667cm 2.0,arrowlength=1.4,arrowinset=0.4]{<-}(6.51,-3.3342187)(6.51,-2.9342186)
\psline[linewidth=0.04cm,arrowsize=0.05291667cm 2.0,arrowlength=1.4,arrowinset=0.4]{<-}(7.91,-1.9342188)(7.91,-0.93421876)
\psline[linewidth=0.04cm,arrowsize=0.05291667cm 2.0,arrowlength=1.4,arrowinset=0.4]{<-}(6.51,-2.3342187)(7.91,-1.9342188)
\psline[linewidth=0.04cm,arrowsize=0.05291667cm 2.0,arrowlength=1.4,arrowinset=0.4]{<-}(7.91,-2.3342187)(6.51,-1.9342188)
\psline[linewidth=0.04cm,arrowsize=0.05291667cm 2.0,arrowlength=1.4,arrowinset=0.4]{<-}(7.91,-3.3342187)(7.91,-2.3342187)
\psframe[linewidth=0.04,linecolor=color513,dimen=outer](7.51,2.8657813)(3.51,0.46578124)
\psframe[linewidth=0.04,linecolor=color513,dimen=outer](7.51,0.46578124)(3.51,-1.9342188)
\psframe[linewidth=0.04,linecolor=color408,dimen=outer](7.11,-2.3342187)(5.91,-2.9342186)
\usefont{T1}{ptm}{m}{n}
\rput(9.464531,1.7907813){$R_{10}$}
\psbezier[linewidth=0.04](3.1917381,0.8977824)(3.4839127,0.8977824)(3.1917381,-0.9742165)(3.4839127,-1.8182166)
\psbezier[linewidth=0.04](3.1917381,0.8977824)(3.63,1.0537823)(3.1917381,2.9257812)(3.4839127,2.7697814)
\psbezier[linewidth=0.04](2.13,-0.39495495)(2.3979259,-0.39495495)(2.13,-3.534096)(2.3979259,-3.272501)
\psbezier[linewidth=0.04](2.13,-0.39495495)(2.5318887,-0.13335985)(2.13,3.0057812)(2.3979259,2.7441862)
\usefont{T1}{ptm}{m}{n}
\rput(2.8245313,0.87078124){$C_{10}$}
\end{pspicture} 
}
\caption{Details of the first round of $E_2$.
\label{fig4}}
\end{figure}
\subsubsection{Some terminology}
Call $E_n$ an elastic cipher of $E_0$ of \emph{level} $n$.
The integer $n$ is also called the level of the extension $E_n$.
$E_n$ is able to encrypt blocks from length $2^{n-1}L$ to $2^nL$.
We call $R_n$ and $C_n$ respectively the round and the cycle functions of $E_n$.
The integers $r_n$ and $c_n$ indicates respectively the number of round and cycles in $E_n$, while
$r_m/E_n$ with $n>m$, indicates the number of rounds of $E_m$ used in $E_n$.
\begin{remark}
 The level increases by 1 when the message length doubles, and in general the level increases by $k$ when the length passes from $2^nL$ to $2^{n+k}L$, which means the level increases logarithmically in respect to the message length.
\end{remark}
\begin{remark}
 Once $n$ is fixed, only $E_n$ takes variable inputs and has initial and final key-dependent permutations, all other $E_i$, with $i<n$, take as input messages of fixed length $2^iL$ and do not have initial and final key-dependent permutation.
\end{remark}
\subsection{Elastic cipher extension of level \texorpdfstring{$n$}{Lg}}
Here we give explicitly the steps to encrypt a message of any length greater than $L$ using an extended elastic cipher of a certain level, obtained from a FIL block cipher $E_0$ that works on messages of length $L$.
This algorithms are the definition of $E_n$.
Furthermore we give some theorems 
to justify the formulas used.
We divide the encrypting algorithm in two parts: initialization of $E_n$ and encryption step.
For brevity, in this article, we will omit the algorithm for decryption and $CycleFunction^{-1}$.
\subsubsection{Initialization of \texorpdfstring{$E_n$}{Lg}}
The initialization of $E_n$ is described in Algorithm \ref{alg1}.
\begin{algorithm}[h]
\caption{Initialization of $E_n$.
We get a message $P$ as input and return the parameters needed to encrypt $P$.}
\label{alg1}
The following algorithm initializes the elastic cipher $E_n$, given a message $P$ of given length.\\
\small
INPUT: $P$.\\
OUTPUT: the level $n$, the exponent $y$, the number of rounds $r_n,l(K)$, the key length, $K$ a random vector of length $l(K)$.
\begin{enumerate}
 \item Determine the level $n$ of the elastic extension as the least power of 2 such that $2^nL\geq l(P)$
 \item Determine $y=l(P)-2^{n-1}L$ // $l$ is the length function.
 \item Determine $r_n=c_0+\lceil c_0\frac{y}{2^{n-1}L}\rceil$ \footnote{The integer $r_n$ has the property that it is the least number of $E_n$ rounds such that, in $E_n$, each input bit influences $E_0$ round function the same number of times it influences the same round function in $E_0$.}
 \item Determine the key $K$:
    \subitem Determine the key length: // (see Th.\ref{thKeyLength})\\
    $l(K)=\{y+2^{n-1}[L\cdot(n-1)+l_{K_R}(0)\cdot x]\}\cdot r_n+2^nL+2y+K_{perm}$
    \subitem Fill a $l(K)$ string with random bits // e.g.
using a pseudorandom generator
\end{enumerate}
\normalsize
\end{algorithm}

\subsubsection{Encryption function \texorpdfstring{$E_n$}{Lg}}
The steps to encrypt a message $P$ with $E_n$ initialized are described in Algorithm \ref{alg2}.\\
\begin{algorithm}[h]
\caption{Encryption function}
\label{alg2}
\small
INPUT: $P,K,y,n,r_n$.\\
OUTPUT: $P$ encrypted.
\begin{enumerate}
 \item $k=0$ // $k$ is a pointer to the first bit not used of the key.
 \item $P=P\oplus K_{[k,...,k+2^{n-1}L+y-1]}$ //Initial whitening
 \item $k=k+2^{n-1}L+y$
 \item $P=perm(P,K_{[k,...,k+l(K_{perm})-1]})$ //Key dependent permutation, \cite[p.
24]{cook2007elastic}.
 \item $k=k+l(K_{perm})$ // move the pointer $k$.
 \item $j=0$
 \item FOR $i=0,\ldots,r_n-1$
  \begin{enumerate}
   \item $P_{[0,...,2^{n-1}L-1]}=CycleFunction(P_{[0,...,2^{n-1}L-1]},n-1)$ // cycle of $E_{n-1}$
   \item $P_{[2^{n-1}L,...,2^{n-1}L+y-1]}=P_{[2^{n-1}L,...,2^{n-1}L+y-1]}\oplus K_{[k,...,k+y-1]}$
   //Add key
   \item $k=k+y$
   \item $T_{[0,...,y-1]}=P_{[j,...,j+y-1]}$ //$T$ is a temporary value for the swap steps
   \item $P_{[j,...,j+y-1]}=P_{[j,...,j+y-1]}\oplus P_{[2^{n-1}L,...,2^{n-1}L+y-1]}$\footnote{Exoring with the bits left out from the $CycleFunction$.
$j$ fixed would not influence the proof of security.
}
   \item $P_{[2^{n-1}L,...,2^{n-1}L+y-1]}=T_{[0,...,y-1]}$ // (d)-(e)-(f) are the swap/xor steps.
   \item $j=j+1 \mod 2^{n-1}L$
  \end{enumerate}
 \item $P=perm(P,K_{[k,...,k+l(K_{perm})-1]})$
 \item $k=k+l(K_{perm})$
 \item $P=P\oplus K_{[k,...,k+2^{n-1}L+y-1]}$
 \item RETURN $P$
\end{enumerate}
\normalsize
\end{algorithm}
\subsubsection{\texorpdfstring{$CycleFuntion(M,m)$}{Lg}}
The cycle of $E_m$, used in the encryption function, is described in Algorithm \ref{alg4}.\\
\begin{algorithm}[h]
\caption{$CycleFuntion(M,m)$.}
\label{alg4}
\small
INPUT: the bit string $M$, the level $m$ of the cycle function ($M$ has length $2^mL$).\\
OUTPUT: the $2^mL$ bit string $M$ processed by $CycleFunction$.
\begin{enumerate}
 \item IF $m=0$ THEN
  \begin{enumerate}
    \item FOR $i=1,\ldots,x$ DO
	\begin{enumerate}
	\item $M=R_0(M,K_{k,...,k+l_{K_R}(0)-1})$
	\item $k=k+l_{K_R}(0)$
	\end{enumerate}
    \item RETURN M
  \end{enumerate}
 \item IF $m>0$ THEN
  \begin{enumerate}
   \item $A=M_{[0,...,2^{m-1}L-1]}$
   \item $B=M_{[2^{m-1}L,...,2^mL-1]}$
   \item FOR $i=1,2$ DO
   \begin{enumerate}
    \item $A=CycleFunction(A,m-1)$
    \item $B=B\oplus K_{[k,...,k+2^mL-1]}$
    \item $k=k+2^mL$
    \item $T=A$
    \item $A=A\oplus B$
    \item $B=T$
   \end{enumerate}
   \item RETURN $A \arrowvert\arrowvert B$
  \end{enumerate}
\end{enumerate}
\normalsize
\end{algorithm}
\subsubsection{On the number of rounds and key length}
Notice that $r_n$ does not represent the number of rounds of level $m$ in $E_n$, but the number of rounds of level $n$ in $E_n$.
To know how many times the cycle function (or equivalently the round function) of $E_m$ is used in $E_n$, then we need the following:
\begin{theorem} [Number of Rounds $R_m$ in $E_n$]\label{thNumRoundsR0}
The number $r_{m,n}$ of rounds of level $m$ in $E_n$ is:
\begin{equation}
 r_m/E_n = r_n\cdot2^{n-m} \,.
\end{equation}
In particular\footnote{The number of rounds of level 0 used in the elastic increases exponentially with $n$, which means polynomially (in our case linearly) with the input length $2^nL$.} if $m=0$:
\begin{equation}
 r_0/E_n = r_n\cdot2^{n-1}\cdot x\leq r_0\cdot 2^{n-1}\cdot x \,.
\end{equation}
\end{theorem}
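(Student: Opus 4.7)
\medskip

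\noindent\textbf{Proof plan.} My plan is to do a downward induction on $m$, starting from $m=n$, using the recursive structure of the construction: one round of $E_k$ is, by definition, one cycle of $E_{k-1}$, and one cycle of $E_{k-1}$ unpacks into a fixed number of rounds of $E_{k-1}$. So I would first extract from the definitions the exact ``branching factor''.

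\medskip

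\noindent\textbf{Step 1: counting rounds per cycle.} I would read off from the \textsf{CycleFunction} pseudocode that, for every $k\ge 1$, a call \textsf{CycleFunction}$(\cdot,k)$ consists of the \texttt{FOR $i=1,2$} loop, each iteration of which is exactly one round of $E_k$ (apply \textsf{CycleFunction}$(A,k-1)$, i.e.\ the cycle of $E_{k-1}$ that serves as the round function of $E_k$, followed by the key XOR / swap / exor pattern of Def.~\ref{defElastic}). Hence
\[
\#\{\text{rounds of }E_k\text{ in one cycle of }E_k\}=2\qquad(k\ge 1),
\]
which is also consistent with the definition $c_k=r_k/2$. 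For $k=0$, the base block cipher is not elastic: by the choice made in Section~\ref{secCook}, one cycle of $E_0$ is made of exactly $x$ consecutive $R_0$-rounds, so
\[
\#\{\text{rounds of }E_0\text{ in one cycle of }E_0\}=x.
\]

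\medskip

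\noindent\textbf{Step 2: induction on $n-m$ for $m\ge 1$.} I would prove $r_m/E_n=r_n\cdot 2^{n-m}$ by induction on $n-m$. The base case $m=n$ is trivial: $E_n$ uses exactly $r_n$ rounds of itself. For the inductive step, assume $r_m/E_n=r_n\cdot 2^{n-m}$ with $m\ge 2$. By definition, each round of $E_m$ has one cycle of $E_{m-1}$ as its round function, and by Step~1 each such cycle unpacks into $2$ rounds of $E_{m-1}$. Therefore
\[
r_{m-1}/E_n \;=\; 2\cdot (r_m/E_n) \;=\; 2\cdot r_n\cdot 2^{n-m} \;=\; r_n\cdot 2^{n-(m-1)},
\]
which closes the induction and gives the first formula.

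\medskip

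\noindent\textbf{Step 3: the case $m=0$ and the bound.} Applying the induction down to $m=1$ gives $r_1/E_n=r_n\cdot 2^{n-1}$. Each round of $E_1$ uses one cycle of $E_0$, which by Step~1 is $x$ rounds of $R_0$; hence
\[
r_0/E_n \;=\; x\cdot (r_1/E_n) \;=\; r_n\cdot 2^{n-1}\cdot x.
\]
For the inequality $r_0/E_n\le r_0\cdot 2^{n-1}\cdot x$, I would use the formula for $r_n$ from the initialization algorithm together with the range of $y$: since $0\le y\le 2^{n-1}L$,
\[
r_n \;=\; c_0+\Big\lceil c_0\cdot\tfrac{y}{2^{n-1}L}\Big\rceil \;\le\; 2c_0 \;\le\; c_0\cdot x \;=\; r_0,
\]
so multiplying by $2^{n-1}x$ yields the claim (the middle inequality uses the standing assumption that the base cipher has at least two rounds per cycle, $x\ge 2$; if $x=1$ one must instead write $r_n\le 2r_0$, but for the cases of interest the stated bound holds).

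\medskip

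\noindent\textbf{Expected obstacle.} The only real subtlety is to pin down unambiguously what ``a cycle of $E_k$'' is in terms of rounds of $E_k$. Once the constant branching factor $2$ (for elastic levels) and $x$ (for the base level) is justified from the definitions and from \textsf{CycleFunction}, the rest is a one-line geometric-series count.
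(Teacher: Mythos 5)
Your proof is correct. Note first that the paper states this theorem without providing any proof (the statement is followed directly by the Key Length theorem), so there is no authors' argument to compare against; your reconstruction --- reading the branching factors off \textsf{CycleFunction} (two rounds of $E_k$ per cycle of $E_k$ for $k\ge 1$, and $x$ rounds of $R_0$ per cycle of $E_0$), then a downward induction on $m$ with the level $m=0$ handled separately --- is the natural argument and matches the recursive structure of the construction exactly; it also explains why the ``in particular'' formula carries a factor $x$ rather than $2$, i.e.\ why the general displayed formula should be read as holding only for $m\ge 1$. Your caveat about the final inequality is well taken and in fact points to a genuine defect of the statement as printed: the inequality $r_n\cdot 2^{n-1}\cdot x\le r_0\cdot 2^{n-1}\cdot x$ reduces to $r_n\le r_0=c_0\,x$, and since $r_n=c_0+\lceil c_0\,y/(2^{n-1}L)\rceil$ can be as large as $2c_0$, this requires $x\ge 2$; but $x\ge 2$ is not a standing assumption of the paper (the cycle definition explicitly cites AES, where $x=1$, and there $r_n>r_0$ whenever $y>0$). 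So either the hypothesis $x\ge 2$ must be added to the theorem, or the bound weakened to $r_0/E_n\le 2c_0\cdot 2^{n-1}\cdot x=r_0\cdot 2^{n}$, which still supports the footnote's claim that the number of level-$0$ rounds grows linearly in the input length.
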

\begin{proof}
This is because one round of $E_n$ contains one cycle of $E_{n-1}$, which is made of two rounds of $E_{n-1}$(plus a whitening and a swap), and each of them contains 2 rounds of $E_{n-2}$(plus a whitening and a swap), and so on; until one round of $E_1$ is made of $x$ rounds of $E_0$(plus a whitening and a swap).
In formulas:
\begin{align*}
r_0/E_n &= r_n\cdot2^{n-1}\cdot x = \\
& = c_{n-1}+\lceil c_{n-1}\frac{y}{2^{n-1}L}\rceil 2^{n-1}\cdot x \le \\
& \leq 2 \cdot c_0 \cdot2^{n-1} \cdot x =\\
& = c_0 \cdot 2^n \cdot x = \\ 
& = r_0 \cdot 2^{n-1} \cdot x \,.\\
\end{align*}
\end{proof}
The following theorem, beside giving an explicit formula for the key bits needed for encryption shows that Cook’s third requirement for the key schedule is satisfied because the key expansion rate grows polynomially with the input length.
Let’s call $l_{K_C}(n)$ the function that gives the number of bits needed for the key of the cycle of level $n$, or simply the key length of cycle $C_n$.
\begin{theorem}[Key Length]\label{thKeyLength}
 If we consider $E_n$ with initial and final whitening and key-dependent permutation, the number of bits needed as key bits is:
\begin{equation}
 l(K)=\{y+2^{n-1}[L\cdot(n-1)+l_{K_R}(0)\cdot x]\}\cdot r_n+2^nL+2y+K_{perm}\,.
\end{equation}
\end{theorem}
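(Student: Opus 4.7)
The plan is to accumulate the contributions of each structural piece of $E_n$ (initial/final whitening, two key-dependent permutations, and $r_n$ rounds), and for each round to further unwind the recursive \emph{CycleFunction} to count the key bits consumed. The whitening and permutation parts are immediate: the input has length $2^{n-1}L+y$, so the initial and final whitening together contribute $2\cdot(2^{n-1}L+y)=2^{n}L+2y$ bits, and the two key-dependent permutations contribute a combined $K_{perm}$ bits. The real work is to compute, for a single $E_n$-round, how many key bits are consumed; this amounts to computing $l_{K_C}(n-1)+y$, since the round applies one cycle of $E_{n-1}$ on the left $2^{n-1}L$ bits and XORs $y$ key bits into the right $y$-bit portion before the swap.

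Next I would introduce $l_{K_C}(m)$ as the number of key bits used by one call to \emph{CycleFunction}$(\cdot,m)$ and derive a recurrence directly from the pseudocode. The base case $m=0$ executes $x$ applications of $R_0$, each consuming $l_{K_R}(0)$ key bits, giving
\begin{equation*}
l_{K_C}(0)=x\cdot l_{K_R}(0).
\end{equation*}
For $m>0$, the loop runs twice and in each iteration performs one recursive cycle of level $m-1$ on the $2^{m-1}L$-bit half $A$ and one XOR of the other $2^{m-1}L$-bit half $B$ with fresh key bits (the swap/XOR steps consume no key). Hence
\begin{equation*}
l_{K_C}(m)=2\bigl(l_{K_C}(m-1)+2^{m-1}L\bigr)=2\,l_{K_C}(m-1)+2^{m}L.
\end{equation*}

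I would then solve this recurrence by induction on $m$, claiming
\begin{equation*}
l_{K_C}(m)=2^{m}\bigl[mL+x\cdot l_{K_R}(0)\bigr].
\end{equation*}
The base case matches $l_{K_C}(0)=x\cdot l_{K_R}(0)$, and the induction step plugs the hypothesis into the recurrence and simplifies. In particular,
\begin{equation*}
l_{K_C}(n-1)=2^{n-1}\bigl[(n-1)L+x\cdot l_{K_R}(0)\bigr].
\end{equation*}

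Finally I would assemble the total: $r_n$ rounds each contribute $l_{K_C}(n-1)+y$ key bits, to which we add the whitenings and the key-dependent permutations, yielding
\begin{equation*}
l(K)=\bigl\{y+2^{n-1}\bigl[L(n-1)+l_{K_R}(0)\cdot x\bigr]\bigr\}\cdot r_n+2^{n}L+2y+K_{perm},
\end{equation*}
which is the stated formula. The main obstacle I anticipate is bookkeeping: the pseudocode as written mixes block lengths (the literal reading of step (ii) inside \emph{CycleFunction} would give $2^{m}L$ key bits, but since $B$ has length $2^{m-1}L$ only $2^{m-1}L$ of them can be used), so I would explicitly justify using $2^{m-1}L$ fresh key bits per XOR on $B$ before solving the recurrence; otherwise the rest is a straightforward induction and a final sum.
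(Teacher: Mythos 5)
Your derivation is correct, and since the paper states Theorem~\ref{thKeyLength} without printing a proof, the recurrence-based count you give (per-round consumption $l_{K_C}(n-1)+y$, with $l_{K_C}(0)=x\cdot l_{K_R}(0)$, $l_{K_C}(m)=2\,l_{K_C}(m-1)+2^{m}L$, hence $l_{K_C}(m)=2^{m}[mL+x\,l_{K_R}(0)]$, plus $2^nL+2y$ for the two whitenings and $K_{perm}$ for the permutations) is exactly the bookkeeping the stated formula encodes; one can cross-check it against Theorem~\ref{thNumRoundsR0}, since the $2^{n-1}x\,l_{K_R}(0)$ term counts the $R_0$-keys of the $2^{n-1}x$ level-$0$ rounds per $E_n$-round and the $(n-1)2^{n-1}L$ term sums the XOR keys $\sum_{m=1}^{n-1}2^{\,n-1-m}\cdot 2^{m}L$ over the intermediate levels. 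Your explicit justification that step (ii) of \emph{CycleFunction} can only consume $2^{m-1}L$ fresh bits (the length of $B$), despite the pseudocode's literal $2^{m}L$ index range, is the right call and is confirmed by Fig.~\ref{fig4}, where the $\oplus K_B$ and $\oplus K'_B$ operations act on $L$-bit halves; without that correction the recurrence would not reproduce the stated formula. The only residual ambiguity is notational and lies in the paper, not in your argument: the encryption pseudocode consumes $l(K_{perm})$ bits twice, so the single $K_{perm}$ term in the theorem must be read, as you do, as the combined budget of both key-dependent permutations.
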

\begin{proof}
Notice that if our extended cipher is of level $n$, then our function is defined only for level $<n$, because those are the ones with fixed input length.
The function may be defined recursively in the following way: 
\begin{equation}
l_{K_C}(n)=\left\{
	    \begin{array}{rl}
		2^nL+2l_{K_C}(n-1) & \text{if } n>0\\
		l_{K_C}(0)=l_{K_R}(0)\cdot x & \text{if } n=0
	    \end{array} \right.
\end{equation}
We are now ready to prove the formula.
We see that for each cycle key we need $y$ bits for “extra” whitening and $l_{K_C}(n-1)$ bits.
Considering also initial and final whitenings and key-dependent permutations we have:
\begin{equation} 
l(K)=[y +l_{K_C}(n-1) +]\cdot r_n +2(2^{n-1}L+y)+K_{perm}\,.
\end{equation}
If we write $l_{K_C}(n-1)$ in a compact form we obtain:
\begin{align*}
l_{K_C}(n-1) 
& = 2^nL+2l_{K_C}(n-2) = \\
& = 2^nL+2(2^{n-1}L+2l_{K_C}(n-3)) = \\
& = 2^nL+2^nL+2^2l_{K_C}(n-3) = \\
& = \ldots = \\
& = \underbrace{2^nL+...+2^nL}_{n-1}\text{times}+2^{n-1\cdot l_{K_C}(0)} = \\
& = 2^{n-1}L(n-1)+2^{n-1}\cdot l_{K_R}(0)\cdot x = \\
& = 2^{n-1}[L\cdot(n-1)+l_{K_R}(0)\cdot x]\,.
\end{align*}
\end{proof}

\section{Security of the elastic cipher extension}\label{secSecurity}
\subsection{Diffusion}
We recall that if every input bit to a $b$-bit block cipher influences the value in all $b$ bits after $q$ rounds, then the block cipher is said to have \emph{complete diffusion} in $q$ rounds.
Complete diffusion does not imply security and is not the same as diffusion in an ideal
block cipher where changing a single bit of input will cause each individual bit of output to
change with $1/2$ probability. We refer to this case as \emph{ideal diffusion}. 
In complete diffusion, the probability each individual bit of output changes must only be $> 0\%$ and may be $100\%$.\\
In her paper, \cite[page 8]{cook2007elastic}, Cook shows that if complete diffusion occurs after $q$ cycles in $E_0$ (the fixed length block cipher), then it occurs after at most $q+1$ rounds in $E_1$ (the elastic version of $E_0$).
It is possible to extend this proof for $E_n$.
\begin{theorem}[Complete Diffusion]\label{thComplDiff}
If complete diffusion occurs after $q$ cycles in $E_{n-1}$ (an elastic cipher working with length message $2^{n-1}L$), then it occurs after at most $q+1$ rounds in $E_n$(the elastic version of $E_{n-1}$).
\end{theorem}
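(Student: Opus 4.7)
The plan is to mimic Cook's argument for $E_1$ with $E_{n-1}$ playing the role of $E_0$. By Definition~\ref{defElastic} applied at level $n$, one round of $E_n$ applies a single cycle of $E_{n-1}$ to the leftmost $M := 2^{n-1}L$ bits, XOR-adds key bits to the rightmost $y$ bits to form $T$, and performs the swap that places a chosen $y$-bit substring $S$ of the cycle output in the rightmost $y$ output positions while replacing those same positions inside the left half by $S \oplus T$. Structurally this is identical to a round of $E_1$, so the proof reduces to counting how many cycles of $E_{n-1}$ each input bit has effectively traversed after $t$ rounds of $E_n$.

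Let $L_t, R_t$ denote the leftmost $M$ bits and rightmost $y$ bits after round $t$ of $E_n$, and write $L_0, R_0$ for the plaintext halves. First I would handle round $1$ explicitly. Outside the $S$-positions, the bits of $L_1$ are pure cycle outputs and therefore depend only on $L_0$. Inside the $S$-positions, $L_1$ equals $S \oplus T$ and hence depends on both $L_0$ and $R_0$. Finally, $R_1 = S$ depends only on $L_0$. So after round~$1$ an $R_0$-dependence has been injected into the left half, but $R_0$ has not yet reached the right half.

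Next I would iterate. At each subsequent round $t \geq 2$, one additional cycle of $E_{n-1}$ is applied to the (possibly permuted and key-XORed) left half. Because permuting bit positions and adding fixed key bits cannot remove the set of input bits on which an intermediate bit depends, the left half at round $t$ is, at the dependency level, the result of $t$ consecutive cycles of $E_{n-1}$ applied to $L_0$, and also the result of $t-1$ consecutive cycles applied to input data already carrying $R_0$-dependence. Invoking the hypothesis for $E_{n-1}$: after $q$ rounds of $E_n$, every bit of $L_q$ depends on every bit of $L_0$; after $q+1$ rounds, the $R_0$-dependence injected at round~$1$ has traversed $q$ cycles of $E_{n-1}$, so every bit of $L_{q+1}$ also depends on every bit of $R_0$. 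Finally $R_{q+1}$ is a substring of the round-$(q+1)$ cycle output, so the same argument shows each of its bits depends on every input bit. Hence complete diffusion is attained by round $q+1$.

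The main obstacle I expect is the middle step: making rigorous that interleaving the swap/exor operations between successive cycles of $E_{n-1}$ does not break the cycle-count reasoning, i.e.\ that ``$t$ rounds of $E_n$'' processes the left half through exactly $t$ cycles of $E_{n-1}$ up to dependency-preserving bit shuffles, and that the chosen $S$-positions faithfully carry the accumulated dependencies into the right half. Once this bookkeeping is pinned down, the theorem follows immediately from the $q$-cycle complete-diffusion hypothesis on $E_{n-1}$.
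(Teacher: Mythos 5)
Your proposal follows essentially the same argument as the paper: the leftmost $2^{n-1}L$ input bits enter the chain of $E_{n-1}$-cycles immediately and thus diffuse completely within $q$ rounds (reaching the rightmost $y$ output positions via the selected substring $S$), while the rightmost $y$ input bits are injected into the left half only through the swap/exor step and therefore enter the cycle chain one round late, yielding $q+1$ rounds in total. The bookkeeping issue you flag (that the interleaved swap/exor steps preserve dependencies) is treated just as informally in the paper's own proof, so your write-up is, if anything, slightly more explicit than the original.
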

\begin{proof}
 According to \cite[page 6]{cook2007elastic}, complete diffusion is achieved in $q$ rounds \emph{if every single bit input to a $l$-bit block cipher influences the value in all $l$ bits after $q$ rounds}, 
where we say that a $q$ rounds input bit $b_i$ influences a $q$ rounds output bit $b_j$ if \emph{changing $b_i$, 
while holding all other $l-1$ input bits constant, causes $b_j$ to change with probability $>0$, when the probability is taken over all possible values of other input bits and the key bits are held constant}.\\
First notice that whitening does not impact diffusion, so, for the purposes of our proof, we can view the elastic expansion without them.\\
We can see that each of the first $2^{n-1}$ input bits $b_j$ that influences a bit in position $j$ of the output of $C_{n-1}$ will influence one (or more) bit in position $j$ of the output of $R_n$, and if $b_i$ influences a bit xored after the cycle, then $b_i$ will also influence a output bit of $R_n$ in position between $2^{n-1}L$ and $2^{n-1}L+y-1$.\\
So, if we have complete diffusion in $E_{n-1}$ after $q$ rounds, then, in $E_n$, the first $2^{n-1}L$ input bits will keep influencing all $2^nL+y$ output bits after $q$ rounds.\\
In $E_n$, the rightmost $y$ bits of the input do not influence the first cycle, 
but because of the swap step they influence the second cycle and since then they will influence all the cycles and will reach complete diffusion after $q+1$ cycles.
Figure \ref{fig9} shows the situation.
\begin{figure}
\centering
\scalebox{0.8} 
{
\begin{pspicture}(0,-5.163594)(9.385625,5.163594)
\definecolor{color274b}{rgb}{0.42745098039215684,0.42745098039215684,0.42745098039215684}
\definecolor{color274}{rgb}{0.3803921568627451,0.3803921568627451,0.3803921568627451}
\definecolor{color590}{rgb}{0.00392156862745098,0.00392156862745098,0.00392156862745098}
\psframe[linewidth=0.04,dimen=outer](6.2,3.4257812)(0.0,2.6257813)
\psframe[linewidth=0.04,dimen=outer](6.2,2.2257812)(0.0,1.4257812)
\psframe[linewidth=0.04,dimen=outer](6.2,-1.3742187)(0.0,-2.1742187)
\psframe[linewidth=0.04,dimen=outer](6.2,-2.5742188)(0.0,-3.3742187)
\psframe[linewidth=0.04,dimen=outer](6.2,4.2257814)(0.0,3.8257813)
\psframe[linewidth=0.04,dimen=outer](8.0,4.2257814)(6.2,3.8257813)
\psframe[linewidth=0.04,dimen=outer](6.2,-4.5742188)(0.0,-4.974219)
\psframe[linewidth=0.04,dimen=outer](8.0,-4.5742188)(6.2,-4.974219)
\psframe[linewidth=0.04,dimen=outer](4.8,-2.5742188)(3.0,-3.3742187)
\psline[linewidth=0.04cm](6.8,4.2257814)(6.8,2.6257813)
\psline[linewidth=0.04cm](6.8,2.6257813)(5.2,2.2257812)
\psline[linewidth=0.04cm](5.2,2.2257812)(6.0,-3.3742187)
\psline[linewidth=0.04cm](5.2,2.2257812)(4.6,-3.3742187)
\psline[linewidth=0.04cm](5.2,2.2257812)(3.2,-3.3742187)
\psline[linewidth=0.04cm](5.2,2.2257812)(0.6,-3.3742187)
\psline[linewidth=0.04cm](5.2,2.2257812)(0.2,-3.3742187)
\psline[linewidth=0.04cm](0.2,-3.3742187)(0.2,-4.5742188)
\psline[linewidth=0.04cm](0.6,-3.3742187)(0.6,-4.5742188)
\psline[linewidth=0.04cm](3.2,-3.3742187)(3.2,-4.5742188)
\psline[linewidth=0.04cm](4.6,-3.3742187)(4.6,-4.5742188)
\psline[linewidth=0.04cm](6.0,-3.3742187)(6.0,-4.5742188)
\psline[linewidth=0.04cm](3.2,-3.3742187)(6.4,-4.5742188)
\psline[linewidth=0.04cm](4.6,-3.3742187)(7.8,-4.5742188)
\psline[linewidth=0.04cm,linestyle=dashed,dash=0.16cm 0.16cm](5.2,2.2257812)(3.8,-3.3742187)
\psline[linewidth=0.04cm,linestyle=dashed,dash=0.16cm 0.16cm](3.8,-3.3742187)(3.8,-4.5742188)
\psline[linewidth=0.04cm,linestyle=dashed,dash=0.16cm 0.16cm](3.8,-3.3742187)(6.8,-4.5742188)
\psline[linewidth=0.04cm,linecolor=red,fillcolor=color274b](1.0,4.2257814)(1.0,3.4257812)
\psline[linewidth=0.04cm,linecolor=red,fillcolor=color274b](1.0,3.4257812)(0.2,-2.5742188)
\psline[linewidth=0.04cm,linecolor=red,fillcolor=color274b](1.0,3.4257812)(0.6,-2.5742188)
\psline[linewidth=0.04cm,linecolor=red,fillcolor=color274b](1.0,3.4257812)(6.0,-2.5742188)
\psline[linewidth=0.04cm,linecolor=red,fillcolor=color274b](6.0,-2.5742188)(6.0,-3.3742187)
\psline[linewidth=0.04cm,linecolor=red,fillcolor=color274b](0.6,-2.5742188)(0.6,-3.3742187)
\psline[linewidth=0.04cm,linecolor=red,fillcolor=color274b](0.2,-2.5742188)(0.2,-3.3742187)
\psline[linewidth=0.04cm,linecolor=red,fillcolor=color274b,linestyle=dashed,dash=0.16cm 0.16cm](1.0,3.4257812)(3.8,-2.5742188)
\psline[linewidth=0.04cm,linecolor=red,fillcolor=color274b,linestyle=dashed,dash=0.16cm 0.16cm](3.8,-2.5742188)(3.8,-3.3742187)
\psarc[linewidth=0.04,linecolor=color590,fillcolor=color274b](8.0,3.0257812){0.4}{0.0}{90.0}
\rput{-90.0}(10.974218,5.025781){\psarc[linewidth=0.04,linecolor=color590,fillcolor=color274b](8.0,-2.9742188){0.4}{0.0}{90.0}}
\rput{-180.0}(17.6,0.8515625){\psarc[linewidth=0.04,linecolor=color590,fillcolor=color274b](8.8,0.42578125){0.4}{0.0}{90.0}}
\rput{-270.0}(8.425781,-9.174219){\psarc[linewidth=0.04,linecolor=color590,fillcolor=color274b](8.8,-0.37421876){0.4}{0.0}{90.0}}
\psline[linewidth=0.04cm,linecolor=color590,fillcolor=color274b](8.4,3.0257812)(8.4,0.42578125)
\psline[linewidth=0.04cm,linecolor=color590,fillcolor=color274b](8.4,-0.37421876)(8.4,-2.9742188)
\usefont{T1}{ptm}{m}{n}
\rput{-90.0}(9.166407,9.166407){\rput(9.163593,0.01578125){Application of $R_n$ (and hence $C_{n-1}$) $q+1$ times}}
\end{pspicture} 
}
\caption{One-Bit Diffusion in the Elastic Network.
\label{fig9}}
\end{figure}
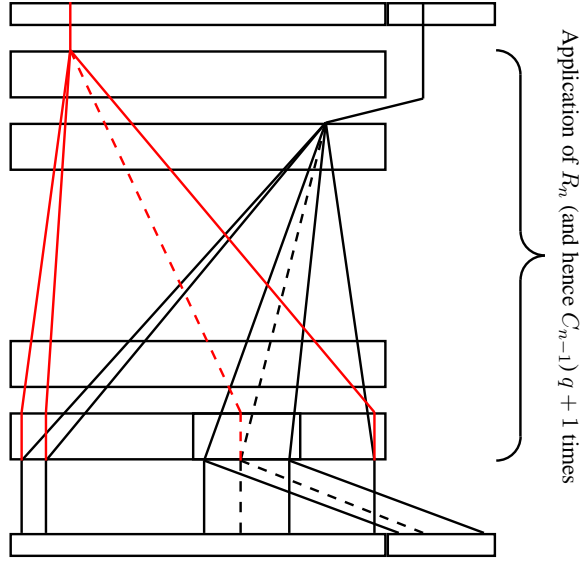
\end{proof}

\subsubsection{On diffusion probabilities and ideal diffusion}
We now illustrate how the elastic expansion impacts on the probabilities that a bit of the output of a round function changes, given that one input bit changed while keeping all other input bits fixed, and with constant key bits.\\
It is possible to generalize the previous theorem also in the case of ideal diffusion.
\begin{theorem}[Ideal Diffusion]\label{thIdealDiff}
If ideal diffusion occurs after $q$ cycles in $E_{n-1}$ (an elastic cipher working with length message $2^{n-1}L$), then it occurs after at most $q+1$ rounds in $E_n$(the elastic version of $E_{n-1}$).
\end{theorem}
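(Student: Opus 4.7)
My plan is to mirror the proof of Theorem~\ref{thComplDiff}, but to upgrade every ``$b_i$ influences $b_j$'' statement to the quantitative ``$b_i$ flips $b_j$ with probability exactly $1/2$'', feeding in the one-round computation $p^B = q_c^B = 1/2$ performed in experiment $B$ just before the statement.

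First, as in Theorem~\ref{thComplDiff}, I would discard the two whitenings and the two key-dependent permutations: they are bijections that either shift output values by a constant or permute output positions, so they leave the counters $n_{rs}$ and the flip probabilities $p_c,p_e$ invariant. Then I would fix an input bit position $i$ and an output bit position $j$ of $E_n$ and split on whether $i$ lies in the leftmost $2^{n-1}L$ positions or in the rightmost $y$ positions. In the leftmost case, the inductive hypothesis on $E_{n-1}$ gives that after $q$ applications of $C_{n-1}$ the bit $P_i^{in}$ flips every cycle-output bit with probability $1/2$; for $j$ in a cycle position this is already the required ideal diffusion, while for $j$ in a swapped/XORed position experiment $B$ applies directly: with the cycle output at $j$ already $1/2$-flipping we have $n_c^A = n_e^A = 2^{L-1}/2$, and the displayed formulas for $n_{rs}^B$ yield $n_c^B = n_e^B = 2^{L+y-1}/2$, hence $p^B = 1/2$. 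In the rightmost case, the swap step of the first round of $E_n$ injects $P_i^{in}$ into the leftmost block from round two onward; from then on it behaves like a leftmost-case bit and reaches ideal diffusion in $q$ further rounds, for a total of $q+1$ rounds of $E_n$.

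The step I expect to be most delicate is the propagation of $p^B = 1/2$ across successive rounds. Experiment $B$ was set up for a single black-box $f$ whose XOR partner ranges uniformly and independently of the chosen input bit, whereas from the second round of $E_n$ on the $y$-bit tail that gets XORed is a function of the previous state and is therefore correlated with $P_i^{in}$. I would resolve this by induction on the round index: at each round the bit XORed onto the cycle output is, by the preceding round's ideal-diffusion guarantee, either deterministic in $P_i^{in}$ (so the ordinary experiment-$A$ computation applies) or already $1/2$-flipping in $P_i^{in}$ (in which case the symmetry argument on the four counters still yields $n_c = n_e$). Once this is in place, the combinatorial skeleton of the previous paragraph closes the proof in at most $q+1$ rounds.
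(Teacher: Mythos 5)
Your proposal takes essentially the same route as the paper, whose entire proof of this theorem consists of the remark that it follows by ``reasoning as in Theorem~\ref{thComplDiff}'' together with the experiment-$B$ computation $p^B=\tfrac12$ --- precisely the two ingredients you combine, with the same dismissal of whitenings, the same split between the leftmost $2^{n-1}L$ and rightmost $y$ input positions, and the same swap-step argument accounting for the extra round. You are in fact more careful than the paper in flagging the cross-round correlation between the XORed tail bits and $P_i^{in}$ (the paper passes over this in silence); just note that your patch via marginal flip probabilities alone is not airtight --- two bits each flipping with probability $\tfrac12$ can XOR to a bit that never flips, so an independence-type argument would be needed --- but this is a gap shared with, not introduced beyond, the paper's own proof.
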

\begin{proof}
Consider a black box $f$ that works with a $L$ bit string (which represents our cycle function).
We call the input bit string $P^{in}$ and the output bit string $P^{out}$, and we denote a single input bit (in position $i$) $P_i^{in}$ and a single output bit (in position $j$) $P_i^{out}$, where $0\leq i,j\leq L-1$.
Consider the following experiment $A$:
\small
\begin{enumerate}
 \item Initialize four counters $n_{00}=0,n_{01}=0,n_{10}=0,n_{11}=0$.
 \item Choose position $i,j$ for an input bit $P_i^{in}$ and an output bit $P_j^{out}$.
 \item Fix all other input bit to some value.
 \item Fix $P_i^{in}=0$, calculate $f(P^{in})$ and take note of the value assumed by $P_j^{out}$.
 \item Fix $P_i^{in}=1$, calculate $f(P^{in})$ and take note of the value assumed by $P_j^{out}$.
 \item If $P_j^{out}$ was equal to 0 both in point 4 and 5 then $n_{00}=n_{00}+1$.\\
       If $P_j^{out}$ was equal to 0 both in point 4 and 5 then $n_{11}=n_{11}+1$.\\
       If $P_j^{out}$ was equal to 0 in point 4 and to 1 in point 5 then $n_{01}=n_{01}+1$.\\
       If $P_j^{out}$ was equal to 1 in point 4 and to 0 in point 5 then $n_{10}=n_{10}+1$.
 \item Repeat the experiment for all other $2^{L-1}$ possible values of the input.
\end{enumerate}
\normalsize
Denote the number of times we had a change in the bit $P_j^{out}$ with $n_c=n_{01}+n_{10}$, and denote the number of times the value in $j$\-th position remained equal with $n_e=2^{L-1}-n_c=n_{00}+n_{11}$.\\
At the end of the experiment, changing the bit in position of the input produces a change in the bit in position $j$ of the output with probability $p_c=\frac{n_c}{2^{L-1}}$ and it does not produce a change with probability $p_e=\frac{n_e}{2^{L-1}}$.\\
The ideal situation for a cycle function is when $n_{00}=n_{11}=n_{01}=n_{10}=\frac{2^{L-1}}{4}$.\\
Now consider a second experiment $B$, where we have the same black box $f$ as in experiment $A$, and where we add some new steps, which are the elastic steps that influence diffusion, namely the xor and swap steps,
see in Fig. \ref{fig10} $(0\leq k<y)$.
\begin{figure}
\centering
\scalebox{1} 
{
\begin{pspicture}(0,-1.6792188)(9.449062,1.6792188)
\psframe[linewidth=0.04,dimen=outer](6.57,0.8257812)(3.37,0.22578125)
\psframe[linewidth=0.04,dimen=outer](4.97,-0.17421874)(4.17,-0.77421874)
\psline[linewidth=0.04cm,arrowsize=0.05291667cm 2.0,arrowlength=1.4,arrowinset=0.4]{<-}(4.97,0.8257812)(4.97,1.2257812)
\psline[linewidth=0.04cm,arrowsize=0.05291667cm 2.0,arrowlength=1.4,arrowinset=0.4]{<-}(4.57,-0.17421874)(4.57,0.22578125)
\psline[linewidth=0.04cm,arrowsize=0.05291667cm 2.0,arrowlength=1.4,arrowinset=0.4]{<-}(4.57,-0.17421874)(6.97,0.22578125)
\psline[linewidth=0.04cm,arrowsize=0.05291667cm 2.0,arrowlength=1.4,arrowinset=0.4]{<-}(6.97,0.22578125)(6.97,1.2257812)
\psline[linewidth=0.04cm,arrowsize=0.05291667cm 2.0,arrowlength=1.4,arrowinset=0.4]{<-}(6.97,-0.17421874)(4.57,0.22578125)
\psline[linewidth=0.04cm,arrowsize=0.05291667cm 2.0,arrowlength=1.4,arrowinset=0.4]{<-}(6.97,-1.1742188)(6.97,-0.17421874)
\psline[linewidth=0.04cm,arrowsize=0.05291667cm 2.0,arrowlength=1.4,arrowinset=0.4]{<-}(4.57,-1.1742188)(4.57,-0.77421874)
\usefont{T1}{ptm}{m}{n}
\rput(4.9245315,1.4507812){$P_i^{in}$}
\usefont{T1}{ptm}{m}{n}
\rput(4.5445313,-0.46921876){$\oplus$}
\usefont{T1}{ptm}{m}{n}
\rput(5.0445314,0.51078125){$f$}
\usefont{T1}{ptm}{m}{n}
\rput(6.9445314,1.4707812){$P_{k+2^{L-1}}^{in}$}
\usefont{T1}{ptm}{m}{n}
\rput(7.094531,-1.3892188){$P_{k+2^{L-1}}^{out}$}
\usefont{T1}{ptm}{m}{n}
\rput(4.3145313,-1.4492188){$P_j^{out}=f(P_i^{in})+P_{k+2^{L-1}}^{in}$}
\end{pspicture} 
}
\caption{One-Bit Diffusion in One Round.
\label{fig10}}
\end{figure}
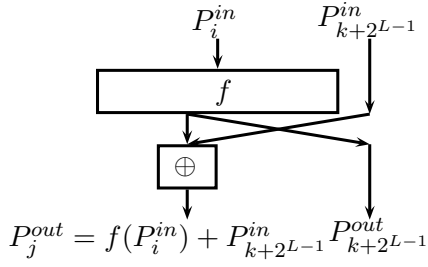
Now the input bit string has $L+y$ bits.\\
We want to repeat the experiment with this new scheme to see how the elastic expansion acts on the probability of a change in one single output bit given a variation in one single input bit.\\
First consider the output of $f$.
We observe that repeating the experiment $2^{L+y-1}$ times the output string right after the black box \footnote{we will refer to it as $f(P_i^{in})$, and $f(P_i^{in})_j$ will indicate its $j$\-th bit} will give $n_{rs}^B=n_{rs}^A\cdot 2^y$, where $n_{rs}^X$ denotes the number of times the $j$\-th output bit passed from $r$ to $s$ in the experiment $X$, with $r,s\in\{0,1\}$.
This is because we are actually repeating the experiment $A$ $2^y$ times, with the leftmost $y$ bits not influencing the function $f$.
This gives that $p_c^B=\frac{n_c^B}{2^{L+y-1}}=\frac{n_{01}^B+n_{10}^B}{2^{L+y-1}}=\frac{n_{01}^A2^y+n_{10}^A2^y}{2^{L+y-1}}=\frac{n_{01}^A+n_{10}^A}{2^{L-1}}=p_c^A$.\\
Now consider the output of the entire experiments.
Because last $y$ bits do not influence the black box function, they do not influence the part of the final output of the experiment $B$ that does not come from the output of the last xor.
So we are interested in analyzing the bits of the output in those position $j$ influenced by the xor step.\\
Through the entire experiment $B$, the bit $f(P_i^{in})$ will be xored half of the times with 0 and half of the times with 1 (which come from the rightmost $y$ bits).
This means that if we counted, for instance, $n_{00}^A\cdot 2^y$ times for the $j$\-th output bit of $f$ not changing from 0, after the xor we will count $\frac{n_{00}^A\cdot2^y}{2}$ times for the $j$\-th output bit of the xor not changing from 0.
With the same reasoning we obtain the following scheme:\\
\begin{align}
  \nonumber
  n_{00}^B=(\frac{n_{00}^A}{2}+\frac{n_{01}^A}{2})2^y,
  n_{11}^B=(\frac{n_{11}^A}{2}+\frac{n_{10}^A}{2})2^y,\\
  n_{01}^B=(\frac{n_{01}^A}{2}+\frac{n_{00}^A}{2})2^y,
  n_{10}^B=(\frac{n_{10}^A}{2}+\frac{n_{11}^A}{2})2^y
\end{align}
The probability a bit of the output (in position $j$) changed given a change of one bit in the input (in position $i$) is:
\begin{equation}
 p^B=\frac{n_{01}^B+n_{10}^B}{2^{L+y-1}}=
 \frac{(\frac{n_{01}^A}{2}+\frac{n_{00}^A}{2})2^y+(\frac{n_{10}^A}{2}+\frac{n_{11}^A}{2})2^y}{2^{L+y-1}}
  =\frac{1}{2}\cdot\frac{n_e^A+n_c^A}{2^{L-1}}
 =\frac{1}{2} 
\end{equation}
\noindent And similarly $q_c^B=1/2$.\\
This means that $P_j^{out}$ (when $j$ is a position included between the positions coming from the xor output) changes with probability $p^B=50\%$, given a variation of $P_i^{in}$.\\
Reasoning as in Theorem \ref{thComplDiff}, and with the last results, we proved the theorem.
\end{proof}
\begin{note}
 The fact that after the xor each bit changes with probability $1/2$ may be exploited as a weakness,
because an adversary could distinguish a block cipher from the elastic cipher if the elastic round function $R_0$ does not achieve ideal diffusion;
it would be sufficient to implement the experiment we described for a reasonable amount of input messages and see if there is an output bit sequence that change with probability $1/2$, while all other output bit change with probability significantly different from $1/2$.
\end{note}

\subsection{Security against key recovery}
In her paper, \cite[page 17]{cook2007elastic}, Cook proves the following theorem.
\begin{theorem}[Cook]\label{thCook}
Given a fixed-length block cipher, $E_0$, that works on $L$-bit blocks and its elastic version, 
$E_1$, that works on $(L+y)$-bit blocks, where $0\leq y\leq L$, if there exists an attack, $A_1/E_1$, on $E_1$ that allows the round keys to be determined for $r$ consecutive rounds of $E_1$ using polynomial (in $L$ and $r$) time and memory, 
then there exists an attack on $E_0$ with $r$ cycles that finds the expanded key for $E_0$ and that uses polynomial (in $L$ and $r$) many resources as $A_1/E_1$, assuming there are no message-dependent expanded key, 
meaning any expanded-key bits utilized in $E_0$ depend only on the key and do not vary across plaintext or ciphertext inputs.
\end{theorem}
We see Theorem\ref{thCook} as the base step of a proof by induction, and so we prove the following.
We now establish that $E_n$ is secure against any key recovery attack\footnote{In her paper, \cite[page 17]{cook2007elastic}, Cook proves this theorem for an elastic cipher $E_1$ applied to $E_0$ providing us the base step for our proof by induction.} if so is $E_0$.
\begin{theorem}[Security Against Key Recovery]\label{thSecKeyRec}
Given an elastic cipher, $E_{n-1}$ of level $n-1$ (without initial and final whitening and key-dependent permutation), 
working on $2^{n-1}L$-bit blocks and its elastic version, $E_n$, that works on $(2^{n-1}L+y)$-bit blocks, where $0\leq y\leq 2^{n-1}L$, if there exists an attack, $\mathcal{A}_n$, on $E_n$ that allows the round keys to be determined for $r$ consecutive rounds of $E_n$ using $t_{\mathcal{A}_n}$ operation, 
then there exists an attack $\mathcal{A}_{n-1}$ on $E_{n-1}$ with $r$ cycles that finds the expanded key for $E_{n-1}$ and that uses $t_{\mathcal{A}_{n-1}}<O(sr^2+rt_{\mathcal{A}_n})$, assuming there are no message-dependent expanded key, meaning any expanded-key bits utilized in $E_{n-1}$ depend only on the key and do not vary across plaintext or ciphertext inputs.
In particular, if $\mathcal{A}_n$ is polynomial then $\mathcal{A}_{n-1}$ is polynomial.
\end{theorem}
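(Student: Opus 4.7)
The plan is a black-box reduction, mirroring Cook's argument for the base case $n=1$ (cited in the theorem's footnote) and lifting it one level up. Given $\mathcal{A}_n$, I will construct $\mathcal{A}_{n-1}$ by wrapping the $E_{n-1}$ oracle into a simulated $E_n$ oracle, running $\mathcal{A}_n$ on the simulation, and reading the expanded key of $E_{n-1}$ off from the round keys returned by $\mathcal{A}_n$. Chaining this step inductively together with Cook's base step gives a full reduction from any attack on $E_n$ down to an attack on $E_0$.

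The key structural observation, visible in Fig.~\ref{fig4} and forced by Def.~\ref{defElastic} applied to $E_{n-1}$, is that each round of $E_n$ consists of a single cycle $C_{n-1,\cdot}$ of $E_{n-1}$ acting on the leftmost $2^{n-1}L$ bits, composed with a key addition on the rightmost $y$ bits and a swap/XOR step. Hence each round key of $E_n$ splits cleanly into (i) the cycle key that feeds $C_{n-1,\cdot}$, which $\mathcal{A}_{n-1}$ does not know and wishes to recover, and (ii) extension bits (the $y$-bit XOR key, together with whitening and permutation keys used only at level $n$), which $\mathcal{A}_{n-1}$ itself samples uniformly. Once $\mathcal{A}_n$ returns the $r$ consecutive round keys, subtracting off the self-chosen extension bits yields the $r$ consecutive cycle keys of $E_{n-1}$, i.e., its expanded key for $r$ cycles.

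Concretely, $\mathcal{A}_{n-1}$ first draws fresh random extension keys for the elastic wrapper turning $E_{n-1}$ into $E_n$. Then, on each query $P$ that $\mathcal{A}_n$ makes to its (simulated) $E_n$ oracle, $\mathcal{A}_{n-1}$ evaluates the extension operations itself while routing each in-round cycle computation through the $E_{n-1}$ oracle. The ``no message-dependent expanded key'' hypothesis guarantees that the cycle keys of $E_{n-1}$ depend only on $K$, so the simulated $E_n$ oracle has the same distribution as a genuine $E_n$ built on top of the real $E_{n-1}$, and $\mathcal{A}_n$'s advantage transfers unchanged.

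The main obstacle is realising a single cycle of $E_{n-1}$ from what is only an oracle for the complete $E_{n-1}$. I would address this exactly as Cook does at the base level: call the $E_{n-1}$ oracle to obtain full ciphertexts, invoke $\mathcal{A}_n$ on a suitably shifted window of the cipher to strip off the surrounding cycles, and iterate to isolate any target cycle. Accounting for the work -- $O(s)$ bit-operations per cycle over $r$ cycles, repeated $O(r)$ times to cover all target windows, plus up to $r$ invocations of $\mathcal{A}_n$ -- produces the claimed bound $O(sr^2 + r\,t_{\mathcal{A}_n})$. In particular a polynomial $\mathcal{A}_n$ yields a polynomial $\mathcal{A}_{n-1}$, and induction on $n$ with Cook's $E_1/E_0$ theorem closes the argument.
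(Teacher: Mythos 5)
Your overall direction (convert $\mathcal{A}_n$ into $\mathcal{A}_{n-1}$ level by level, with Cook's $E_1/E_0$ theorem as the base case) is the right one, but the mechanism you propose is not executable and is not the paper's. The paper never simulates an $E_n$ oracle on top of $E_{n-1}$ --- and indeed this cannot be done from black-box access to the full $E_{n-1}$, since the $y$-bit key addition and the swap/XOR are interleaved \emph{between} the cycles, which is exactly the obstacle you flag but do not resolve. Instead, the reduction takes the $s$ given pairs $(P_{0i},P_{ri})$ for $E_{n-1}$, pads each plaintext with $y$ zeros and each ciphertext with an arbitrary $y$-bit string, and feeds these padded pairs directly to $\mathcal{A}_n$ as plaintext/ciphertext pairs of an $r$-round reduced $E_n$. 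This is legitimate because the round keys of $E_n$ are unconstrained (pseudorandom key schedule), so some set of round keys is consistent with any such padded pair; this is also why, unlike Cook, no key-schedule consistency check is charged. Your picture in which $\mathcal{A}_{n-1}$ samples the extension keys itself and later ``subtracts them off'' does not correspond to anything in the reduction: the extension key bits relevant to the conversion are part of the round keys \emph{returned by} $\mathcal{A}_n$, not quantities the reduction chooses.

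The more serious gap is the claim that subtracting extension bits from the $r$ returned round keys yields all $r$ cycle keys at once. The conversion from a round key of $E_n$ to a cycle key of $E_{n-1}$ is an XOR of the $y$ whitening bits into the matching positions of the cycle's final whitening, i.e.\ $K^j_{C_{n-1,1}}=K^{j'}_{C_{n-1,1}}\oplus(0\|\ldots\|0\|K^{j'}_{W_1}\|0\|\ldots\|0)$ as in Fig.~\ref{fig13}, and it works \emph{only for the first round}, because there the extension bits entering the swap/XOR are the known, message-independent constant $0\ldots0$. From the second round on, the swapped-in extension bits are message-dependent intermediate values, so no single key correction is valid for all $s$ messages. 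This is precisely why the paper computes $P_{i1}=C_{n-1,1}(K^j_{C_{n-1,1}},P_{i0})$ for every message, re-pads with zeros, and re-runs $\mathcal{A}_n$ on the $(r-1)$-round reduced $E_n$, iterating $r$ times; that restart structure is what produces the $r\,t_{\mathcal{A}_n}$ term and the $O(sr^2)$ term (up to $sr(r+1)/2$ round computations when $\mathcal{A}_n$ needs all intermediate round outputs). Your ``suitably shifted window'' idea is not a substitute for this peel-off-one-cycle-and-restart step, so the claimed cost bound, although it lands on the same formula, is not justified by your argument.
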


\begin{proof}
We show how the attack $\mathcal{A}_n$ can be converted to an attack to $E_{n-1}$ in polynomial time.\\
Suppose we want to find a set of cycles keys $\mathcal{K}^j=\{K_{C_{n-1,1}}^j,...,K_{C_{n-1,r}}^j\}$ that encrypted the set of $s$ messages $\mathcal{P}_0=\{P_{00},...,P_{0(s-1)}\}$ in the ciphertexts $\mathcal{P}_r=\{P_{r0},...,P_{r(s-1)}\}$ using the elastic cipher $E_{n-1}$ working with blocks of length $2^{n-1}L$.\\ 
First of all notice that we put an exponent $j$ to $\mathcal{K}$, because there may be more than one set of cycle keys encrypting the same couples $(P_{0i},P_{ri})$ for $i=0,\ldots,s-1$.
One set is sufficient for us.
To find the set of keys $\mathcal{K}^j$ build an attack as follows.\\
Consider the sets: $\mathcal{P}_0'=\{P_{00}\|0...0, ..., P_{0(s-1)}\|0...0\}$ and $\mathcal{P}_r'= \{P_{r0}\| v_{01}...v_{0y}, ..., \\P_{r(s-1)} \| v_{(s-1)1}...v_{(s-1)y}\}$,
where $P_{00}\|0...0$ is the bit string $P_{0i}$ followed by $y$ zeros (we could choose any fixed string of $y$ bits instead of all zeros), and $P_{ri}\| v_{i1}...v_{iy}$ is the bit string $P_{ri}$ followed by any string of $y$ bits.\\
The elements of the sets $\mathcal{P}_0'$ and $\mathcal{P}_r'$ fit as input and output for a $r$-rounds reduced elastic extension $E_n$ of $E_{n-1}$.\\
So we can apply the attack $\mathcal{A}_n$ to the set of pairs $(P_{0i}\|0...0,P_{ri}\| v_{i1}...v_{iy})$ and obtain a set $\mathcal{K}'$ of $t$ sets $\mathcal{K}^{i'}$ $(i=1,\ldots,t)$ of $r$ round keys for $E_n$ in polynomial time in $L$ and $r$.
View this in Fig.\ref{fig12}.
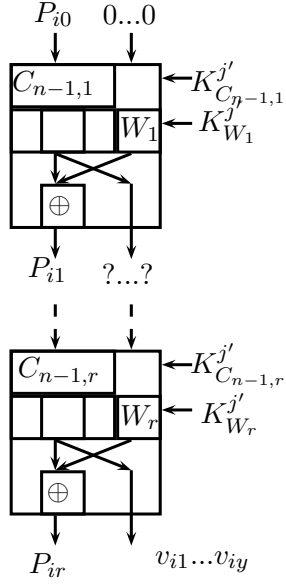
\begin{figure}
\centering
\scalebox{1} 
{
\begin{pspicture}(0,-3.8692188)(5.2490625,3.8692188)
\psframe[linewidth=0.04,dimen=outer](1.89,3.0157812)(0.49,2.4157813)
\psframe[linewidth=0.04,dimen=outer](1.89,2.4157813)(0.49,1.8157812)
\psframe[linewidth=0.04,dimen=outer](1.49,2.4157813)(0.89,1.8157812)
\psframe[linewidth=0.04,dimen=outer](1.49,1.4157813)(0.89,0.81578124)
\psframe[linewidth=0.04,dimen=outer](2.49,2.4157813)(1.89,1.8157812)
\psframe[linewidth=0.04,dimen=outer](2.49,3.0157812)(0.49,0.81578124)
\psline[linewidth=0.04cm,arrowsize=0.05291667cm 2.0,arrowlength=1.4,arrowinset=0.4]{<-}(1.09,3.0157812)(1.09,3.4157813)
\psline[linewidth=0.04cm,arrowsize=0.05291667cm 2.0,arrowlength=1.4,arrowinset=0.4]{<-}(2.09,3.0157812)(2.09,3.4157813)
\psline[linewidth=0.04cm,arrowsize=0.05291667cm 2.0,arrowlength=1.4,arrowinset=0.4]{<-}(1.09,1.4157813)(1.09,1.8157812)
\psline[linewidth=0.04cm,arrowsize=0.05291667cm 2.0,arrowlength=1.4,arrowinset=0.4]{<-}(1.09,0.41578126)(1.09,0.81578124)
\psline[linewidth=0.04cm,arrowsize=0.05291667cm 2.0,arrowlength=1.4,arrowinset=0.4]{<-}(1.09,1.4157813)(2.09,1.8157812)
\psline[linewidth=0.04cm,arrowsize=0.05291667cm 2.0,arrowlength=1.4,arrowinset=0.4]{<-}(2.09,1.4157813)(1.09,1.8157812)
\psline[linewidth=0.04cm,arrowsize=0.05291667cm 2.0,arrowlength=1.4,arrowinset=0.4]{<-}(2.09,0.41578126)(2.09,1.4157813)
\psline[linewidth=0.04cm,arrowsize=0.05291667cm 2.0,arrowlength=1.4,arrowinset=0.4]{<-}(2.49,2.8157814)(2.89,2.8157814)
\psline[linewidth=0.04cm,arrowsize=0.05291667cm 2.0,arrowlength=1.4,arrowinset=0.4]{<-}(2.49,2.2157812)(2.89,2.2157812)
\psframe[linewidth=0.04,dimen=outer](1.89,-0.7842187)(0.49,-1.3842187)
\psframe[linewidth=0.04,dimen=outer](1.89,-1.3842187)(0.49,-1.9842187)
\psframe[linewidth=0.04,dimen=outer](1.49,-1.3842187)(0.89,-1.9842187)
\psframe[linewidth=0.04,dimen=outer](1.49,-2.3842187)(0.89,-2.9842188)
\psframe[linewidth=0.04,dimen=outer](2.49,-1.3842187)(1.89,-1.9842187)
\psframe[linewidth=0.04,dimen=outer](2.49,-0.7842187)(0.49,-2.9842188)
\psline[linewidth=0.04cm,linestyle=dashed,dash=0.16cm 0.16cm,arrowsize=0.05291667cm 2.0,arrowlength=1.4,arrowinset=0.4]{<-}(1.09,-0.7842187)(1.09,-0.18421875)
\psline[linewidth=0.04cm,linestyle=dashed,dash=0.16cm 0.16cm,arrowsize=0.05291667cm 2.0,arrowlength=1.4,arrowinset=0.4]{<-}(2.09,-0.7842187)(2.09,-0.18421875)
\psline[linewidth=0.04cm,arrowsize=0.05291667cm 2.0,arrowlength=1.4,arrowinset=0.4]{<-}(1.09,-2.3842187)(1.09,-1.9842187)
\psline[linewidth=0.04cm,arrowsize=0.05291667cm 2.0,arrowlength=1.4,arrowinset=0.4]{<-}(1.09,-3.3842187)(1.09,-2.9842188)
\psline[linewidth=0.04cm,arrowsize=0.05291667cm 2.0,arrowlength=1.4,arrowinset=0.4]{<-}(1.09,-2.3842187)(2.09,-1.9842187)
\psline[linewidth=0.04cm,arrowsize=0.05291667cm 2.0,arrowlength=1.4,arrowinset=0.4]{<-}(2.09,-2.3842187)(1.09,-1.9842187)
\psline[linewidth=0.04cm,arrowsize=0.05291667cm 2.0,arrowlength=1.4,arrowinset=0.4]{<-}(2.09,-3.3842187)(2.09,-2.3842187)
\psline[linewidth=0.04cm,arrowsize=0.05291667cm 2.0,arrowlength=1.4,arrowinset=0.4]{<-}(2.49,-0.9842188)(2.89,-0.9842188)
\psline[linewidth=0.04cm,arrowsize=0.05291667cm 2.0,arrowlength=1.4,arrowinset=0.4]{<-}(2.49,-1.5842187)(2.89,-1.5842187)
\usefont{T1}{ptm}{m}{n}
\rput(1.0645312,3.6607811){$P_{i0}$}
\usefont{T1}{ptm}{m}{n}
\rput(2.0645313,3.6607811){$0...0$}

\usefont{T1}{ptm}{m}{n}
\rput(1.0645312,2.7207813){$C_{n-1,1}$}
\usefont{T1}{ptm}{m}{n}
\rput(3.5045313,2.7407813){$K_{C_{n-1,1}}^{j'}$}
\usefont{T1}{ptm}{m}{n}
\rput(1.1345313,-1.0792187){$C_{n-1,r}$}
\usefont{T1}{ptm}{m}{n}
\rput(3.5045314,-0.99921876){$K_{C_{n-1,r}}^{j'}$}
\usefont{T1}{ptm}{m}{n}
\rput(3.3545313,2.2407813){$K_{W_1}^{j'}$}
\usefont{T1}{ptm}{m}{n}
\rput(3.3845313,-1.6392188){$K_{W_r}^{j'}$}
\usefont{T1}{ptm}{m}{n}
\rput(2.2045312,2.1407812){$W_1$}
\usefont{T1}{ptm}{m}{n}
\rput(1.1445312,1.1207813){$\oplus$}
\usefont{T1}{ptm}{m}{n}
\rput(2.2045313,-1.6792188){$W_r$}
\usefont{T1}{ptm}{m}{n}
\rput(1.1245313,-2.6792188){$\oplus$}
\usefont{T1}{ptm}{m}{n}
\rput(0.9845312,0.28078124){$P_{i1}$}
\usefont{T1}{ptm}{m}{n}
\rput(2.0445313,0.22078125){$?...?$}
\usefont{T1}{ptm}{m}{n}
\rput(3.0245314,-3.5992188){$v_{i1}...v_{iy}$}
\usefont{T1}{ptm}{m}{n}
\rput(0.9945313,-3.6392188){$P_{ir}$}
\end{pspicture} 
}

\caption{Keys used in $E_n$.
\label{fig12}}
\end{figure}
From $\mathcal{K}'$ we choose one of the possible set of round keys, namely $\mathcal{K}^{j'}=\{K_{C_1}^{j'}\| K_{W_1}^{j'},...,K_{C_r}^{j'}\| K_{W_1}^{j'}\}$.
We can find $K_{C_{n-1,1}}^j$ in the following way.
The last operation in $C_{n-1,1}$ is a whitening of $2^{n-1}L$ bits.
Call this key bit string $K_{C_{n-1,1}[0, \ldots, 2^{n-1}L-1]}^j$, then set:
\begin{multline}
 K_{C_{n-1,1}[0,\ldots,2^{n-1}L-1]}^j=\\
K_{C_{n-1,1}[0,\ldots,2^{n-1}L-1]}^{j'}
\oplus
(0 \| ...
\| 0 \| K_{W_1[2^{n-1}L,...,2^{n-1}L+y-1]}^{j'} \| 0 \| ...
\| 0)
\end{multline}
\noindent Where the number of zeros depends on the position in which the swap occur.
See Figure \ref{fig13}.
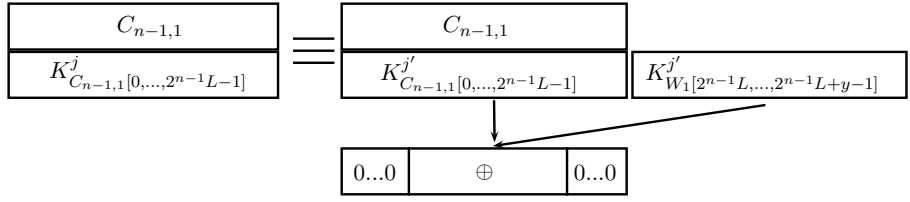
\begin{figure}
\centering
\scalebox{0.8} 
{
\begin{pspicture}(0,-1.62)(17.559063,1.6)
\psframe[linewidth=0.04,dimen=outer](5.4488015,1.6)(0.98,0.8)
\psframe[linewidth=0.04,dimen=outer](5.4488015,0.8)(0.98,0.0)
\psframe[linewidth=0.04,dimen=outer](11.18,1.6)(6.4554687,0.8)
\psframe[linewidth=0.04,dimen=outer](11.18,0.8)(6.4554687,0.0)
\psframe[linewidth=0.04,dimen=outer](15.78,0.8)(11.235469,0.0)
\psframe[linewidth=0.04,dimen=outer](11.18,-0.8)(6.4554687,-1.6)
\psline[linewidth=0.04cm,arrowsize=0.05291667cm 2.0,arrowlength=1.4,arrowinset=0.4]{<-}(8.988802,-0.72)(8.98,-0.02)
\psline[linewidth=0.04cm,arrowsize=0.05291667cm 2.0,arrowlength=1.4,arrowinset=0.4]{<-}(8.96,-0.78)(13.44,-0.1)
\psline[linewidth=0.04cm](5.675469,0.6)(6.3554688,0.6)
\psline[linewidth=0.04cm](5.675469,0.8)(6.3554688,0.8)
\psline[linewidth=0.04cm](5.675469,1.0)(6.3554688,1.0)
\usefont{T1}{ptm}{m}{n}
\rput(3.3245313,1.185){$C_{n-1,1}$}
\usefont{T1}{ptm}{m}{n}
\rput(3.2945313,0.425){$K_{C_{n-1,1}[0,...,2^{n-1}L-1]}^j$}
\usefont{T1}{ptm}{m}{n}
\rput(8.734532,0.405){$K_{C_{n-1,1}[0,...,2^{n-1}L-1]}^{j'}$}
\usefont{T1}{ptm}{m}{n}
\rput(13.394532,0.405){$K_{W_1[2^{n-1}L,...,2^{n-1}L+y-1]}^{j'}$}
\usefont{T1}{ptm}{m}{n}
\rput(8.684531,1.185){$C_{n-1,1}$}
\usefont{T1}{ptm}{m}{n}
\rput(8.824531,-1.215){$\oplus$}
\usefont{T1}{ptm}{m}{n}
\rput(10.644531,-1.215){$0...0$}
\usefont{T1}{ptm}{m}{n}
\rput(7.024531,-1.215){$0...0$}
\psline[linewidth=0.04cm](7.58,-0.8)(7.58,-1.6)
\psline[linewidth=0.04cm](10.18,-0.8)(10.18,-1.6)
\end{pspicture} 
}
\caption{Round key conversion.
\label{fig13}}
\end{figure}

Once we got $K_{C_{n-1,1}[0,...,2^{n-1}L-1]}^j$ we put the remaining bits of $K_{C_{n-1,1}}^j$ equal to the remaining bits of $K_{C_{n-1,1}}^{j'}$, obtaining the desired cycle key $K_{C_{n-1,1}}^j$.
With this key we determine:
\begin{equation}
 P_{i1}=C_{n-1,1}(K_{C_{n-1,1}}^j,P_{i0})
\end{equation}
where $C_{n-1,1}(K_{C_{n-1,1}}^j,.)$ is the cycle function obtained using the cycle key $K_{C_{n-1,1}}^j$.\\
Notice that we can sum the last $y$ key bits of the whitening step because the message to which they are xored is known (all zeros in our case); this is the reason why we can not repeat the same trick with the next cycle key but we have to restart a new attack as we are going to explain.\\
Now we repeat the process mounting the attack to a $r-1$ round reduced extension $E_n$ of $E_{n-1}$, using the couples $(P_{1i}\|0...0,P_{ri}\| v_{i1}...v_{iy})$ and we find one by one the cycle keys $K_{C_{n-1,1}}^j$, with $i=2,\ldots,r$.
The attack is done applying $\mathcal{A}_n$ $r$ times with $s$ pairs of (plaintext, ciphertext) and running one cycle of $E_{n-1}$ for each pair, for a total of $sr$ application of $C_{n-1}$.
If $\mathcal{A}_n$ requires to know the output of each round of the reduced round version of $E_n$, then $\frac{sr(r+1)}{2}$ rounds of $E_n$ have to be computed.
If we call $t_{\mathcal{A}_n}$ the time to run the attack, then the total number of operation required for the reduction is $O(sr^2+rt_{\mathcal{A}_n})$.
In her theorem, Cook also considers the time to check that an expanded key found by $\mathcal{A}_n$ adheres to the key schedule of $E_0$.
For us it is not necessary since the key schedule of $E_{n-1}$ is a pseudorandom generator and as such, any bit sequence must fit.
\end{proof}


\bibliographystyle{acm}
\bibliography{elastic}

\end{document}